\newtheorem{theorem}{Theorem}[section]
\newtheorem{definition}{Definition}
\begin{document}

\begin{frontmatter}

\title{Towards EXPTIME One Way Functions\\
Bloom Filters, Succinct Graphs, Cliques, \& Self Masking
}

\author{Shlomi Dolev}

\affiliation{organization={Department of Computer Science},
            addressline={Ben-Gurion University of the Negev}, 
            city={Beer-Sheva},
            postcode={8410501}, 
            state={},
            country={Israel}}

\begin{abstract}
Consider graphs of $n$ nodes, and use a Bloom filter of length $2\log^3n$ bits. An edge between nodes $i$ and $j$, with $i<j$, turns on a certain bit of the Bloom filter according to a hash function on $i$ and $j$.
Pick a set of $\log n$ nodes and turn on all the bits of the Bloom filter required for these $\log n$ nodes to form a clique.
As a result, the Bloom filter implies the existence of certain other edges, those edges $(x,y)$, with $x<y$, such that all the bits selected by applying the hash functions to $x$ and $y$ happen to have been turned on due to hashing the clique edges into the Bloom filter. 

Constructing the graph consisting of the clique-selected edges and those edges mapped to the turned-on bits of the Bloom filter can be performed in polynomial time in $n$.

Choosing a large enough polylogarithmic in $n$ Bloom filter yields that the graph has only one clique of size $\log n$, the planted clique.

When the hash function is black-boxed, finding that clique is intractable and, therefore, inverting the function that maps $\log n$ nodes to a graph is not (likely to be) possible in polynomial time. 
\end{abstract}

\begin{keyword}
One-Way Function, Cryptography, Random Graph, Succinct Representation.
\end{keyword}

\end{frontmatter}

\section{Introduction}
The assumption that one-way functions exist is the foundation of computational security. A random input is used to construct an instance of a one-way function,
such that it should be hard to reconstruct the random input given the constructed instance. The construction should be done in polynomial time. Therefore, the reconstruction can be done in nondeterministic polynomial time, considering the correct random input in a nondeterministic fashion and computing an identical instance in polynomial time.

One may wish to design one-way functions for which no
{\em deterministic} or {\em randomized} polynomial time algorithm that reveals the input (or an equivalent input) is known. Note that proving that no such deterministic (or randomized) algorithm exists yields $P \neq NP$. 

Succinct representation was introduced in \cite{GW83 (1983)} and further studied in \cite{PY86 (1986)}, proving that graph instances that encode 3-SAT imply NEXPTIME in their succinct representations. See also \cite{DFG13 (2013)} for NEXPTIME hard-in-average succinct-permanent instances.

A new Bloom filter-based succinct representation of graphs allows encoding of an $n$ nodes graph $G$, in polylogarithmic time and space, such that $G$ consists of a $\log(n)$\footnote{Here and in the sequel, we use $\log(n)$ as a synonym for $\log_2(n)$.} nodes clique: a clique among a particular set of nodes. 

Such a succinct representation of a graph (\cite{GW83 (1983),PY86 (1986)})
for which an instance of a ($\log(n)$ size) clique problem exists is beyond $P$ under reasonable assumptions (see \cite{CHKX06 (2006)} for the explicit graph case).
Using Bloom filter encoding allows us to plant
(\cite{FGR+17 (2017)}) a clique in an instance of a one-way function that serves as a {\em heuristic} to requiring more than polynomial time to be reversed.
In fact, it establishes a new potential one-way function candidate for commitment schemes.

In addition, we present techniques to mask the plant clique; blackboxing the arrays and hashes of the Bloom filter. 

 Papadimitriou and Yannakakis \cite{PY86 (1986)} wrote:
``Our results suggest that, as a rule, succinct representations have the effect of {\it precisely} exponentiating the complexity (time {\it or} space) of graph properties.'' 

However, their reductions are for the worst cases, structuring specific, succinct representations, which do not imply the average case complexity. Structuring a succinct representation with a planted solution may reveal the solution. In the sequel, we use a random list of $\log(n)$ vertices that participate in the planted clique, together with (a universal) hash function, resulting in a Bloom filter encoding, to define the entire graph; if the definition exposes the clique vertices, then the pre-image can be immediately found, and reversibility is trivial. The instance we produce does not (explicitly) reveal the clique vertices. Moreover, we use self-masking to mask the planted solution further.

We use the randomization of the clique vertices to encode the universal hash function parameters. Similarly to a subset-sum (in the binary field) presented in \cite{CDM22 (2022)}, we prove that an instance has a high probability of having only one pre-image.

\section{Preliminaries on One Way Functions}

The following definition is taken from \cite{GB08 (2008)}.\\

\noindent
A function $f: \{0,1\}^* \rightarrow \{0,1\}^*$ is \textit{one-way} if:
\begin{enumerate}
    \item There exists a probabilistic polynomial time (PPT) algorithm that, on input $x$, outputs $f(x)$;
    \item For every PPT algorithm $A$ there is a negligible function $\nu_A$ such that for sufficiently large $k$,
   \[
    \Pr[f(z) = y \mid x \xleftarrow{s}
    \{0,1\}^k; \,y \leftarrow f(x); \, z \leftarrow A(1^k, y)] \leq \nu_A(k)
    \]
\end{enumerate}

\vspace{0.4cm}

As proof of the existence of a one-way function implies $P \neq NP$, we only prove univalence with high probability. Namely, the probability of a $z$ value different from $x$ is negligible.  

In the sequel, we describe generations of instances that serve as commitments that we believe are hard to reverse, see e.g., \cite{JP00 (2000)} for using a planted clique in explicit, rather than an implicitly succinct represented graph for a cryptographic commitment. 
We use an input random string as a source for random values when generating an instance; in effect, 
randomly select a $\log(n)$ vertices identities set and possibly hash parameters.

Our instance is generally defined by randomly chosen $\log(n)$ distinct vertices' identities, each of $\log(n)$ bits, that we use to construct a clique; in some constructions, the Bloom filter hash function(s) parameters are also part of the instance. Random strings that yield a permutation of the same set of $\log(n)$ vertices identities are regarded as equivalent, as they yield the same set of vertices; we augment the output with a definition of a permutation $perm$ among the chosen vertices relative to their sorted permutation. Therefore, the reverse requirement is merely to find the set of $\log(n)$ vertices' identities used to output an instance.
 
An instance is univalent if the solution exposes the $\log(n)$ distinct vertices' identities that yield the exposed instance using the generation process. 
Univalence with a certain probability is based on calculating the probability of the existence of more than one solution. 

When the random hash parameters are part of the instance, they are obtained from the (suffix of the) input string. The solution, the set of $\log(n)$ distinct vertices identities, coupled with hash parameters of the instance, serves as input for executing the instance generation procedure and should result in the instance.

\section{Bloom Filter Succinct Graphs with a Planted Clique}

{\bf Universal hash functions.} There are several choices for universal hash functions, which mimic random mapping with a low probability of collisions.
Our analysis assumes that the hashing results are uniformly and independently distributed (approximation), comparable to using random results that are uniformly and independently selected.\\

\noindent
$\bullet$ The first one has been 
presented by Carter and Wegman in \cite{CW79 (1979)}:\\
Let $h_{a,b}$ be a universal hash function mapping an integer in the $1,2,\ldots,n^2$ range. In fact, we consider the input for $h_{a,b}$ expressed by $2 \log(n)$ bits used to describe a graph edge between two graph vertices' identities. 
The calculation of the hash value is by the following formula, where $p_h>2 \log^2 (n)$ is a prime\footnote{A convention for choosing the smallest qualified $p_h$ is used whenever $p_h$ value is not explicitly defined.}, 
$a\neq 0$ and $b$ are smaller than $p_h$, and $m=2\log^2(n)$.
\begin{equation}
h_{a,b}(x)=(((ax+b)~{\bmod {~}}p_h)~{\bmod {~}}m)+1
\end{equation}

We do not restrict ourselves to a specific universal hash function, allowing the use of cryptographic universal hash functions, preferring not to use ones that are based on computation security,  e.g., the universal one-way hash functions \cite{NY89 (1989)}, \cite{CMR98 (1998)}, \cite{M02 (2002)}, that may be employed instead of the universal hash function mentioned above as a building block. 
We next describe an alternative example of a hash function based on the Toeplitz matrix  \cite{K95 (1995)}.\\

\noindent
$\bullet$ Toeplitz matrix-based hash, e.g., \cite{K95 (1995)}:\\
Randomly choose a binary vector $r_1$ of length $2\log(n)$ bits as the first row of a matrix $h$ with the number of rows equal to the logarithmic number of entries of the Bloom filter array.
The vector of the second row is obtained by rotating the bits of $r_1$, and in general, the vector of the $i^{th}$ row is obtained by rotating $r_1$, $i-1$ times.
For a given edge $(u,v)$, $u<v$, multiply the concatenation of the binary representations of $u$ and $v$, $u \circ v$ (we use $\circ$ to represent concatenation) by the Toeplitz matrix associated with $h_{r_1}$ to obtain the hash of an edge $(u,v)$ into $SG$.
Note that in the sequel, we typically use a Toeplitz matrix of $2\log^3(n)$ rows and $2\log(n)$ columns. A repetition pattern appears in the matrix because there are more rows than columns. We may define an {\it Toeplitz matrix enhanced hash} by a matrix using (polylogarithmic) $\log^2(n)$ independently randomly chosen rows; each of these rows defines a square Toeplitz matrix; these matrices are concatenated to form the needed nonsquare matrix, avoiding row repetitions. In the sequel, we refer to the standard Toeplitz matrix, though analogous results can be stated in terms of the enhanced version.\\

\noindent
$\bullet$ Polynomial based $k${\it -wise} independent hash function e.g., \cite{WC (1981)}, \cite{PP08 (2008)}:\\
For random $a_0, \ldots ,a_{k-1} \in \{0,\ldots,p-1\}$, the function
\begin{equation}
h(x) = (\Sigma_{i=0}^{k-1}(a_i x^i\bmod p)\bmod m) +1
\end{equation}
where $p_h>m$ is a prime, is $k$-wise independent. Note that the $h(x)$ computation is polynomial in $k$. Thus, when $k$ is polylogarithmic, polylogarithmic invocations of $h(k)$ remain polylogarithmic in computation time. This fact is useful when hashing clique edges among a polylogarithmic number of nodes.\\

Note that there exists a polylogarithmic-size circuit
(using logical gates, AND, NAND, OR, NOR, XOR, NXOR) representation of the constructed Bloom filter that represents the value one (and the address) of each (and therefore of all) of the entries in the Bloom filter.
Thus, the succinct representation using Bloom filter can be stated in terms of succinct circuits, as assumed in   \cite{GW83 (1983)}, \cite{PY86 (1986)}.\\

\begin{definition} \label{def:graphs}
Let $n$ be the number of nodes in a graph $G(V,E)$.
Vertices identifier $i \in V$ respects, $1 \leq i \leq n$, and an edge identifier
$(i,j) \in E$ respects, $1 \leq i < j \leq n$. 
A random string defines the values of the integers $a$ and $b$ for $h_{a,b}$, each expressed by a polylogarithmic number of bits (e.g., $2\log^3(n)$) that are exposed as the parameters of the universal hash function $h_{a,b}$.
In addition, $\log(n)$ integers, each of $\log(n)$ bits, that define the identities of the vertices of a clique with $\log(n)$ vertices in the graph $G$. 

Given a hash function $h_{a,b}$ and $\log(n)$ numbers,
$v_{i1},v_{i2},\ldots,v_{i\log(n)}$ that represent vertices in a clique of  $\log(n)$ vertices, the graph $G$ is defined as follows:

A binary array $SG$  of $2\log^2(n)$ (later we use other polylogarithmic sizes, e.g., $2\log^3(n)$) bits, where all bits are zero but the bits $SG[h_{a,b}(v_{ij},v_{ik})]$,  which are set to have value one.  
The array $SG$ is a succinct representation of $G(V,E)$, where $V=1,2,\ldots,n$ and $E$ is the set of undirected edges $(u,v)$ ($u<v$) for which $SG[h_{a,b}(u,v)]=1$,
where $h_{a,b}(u,v)= h_{a,b}(u+v\log(n))$.

The output consists of $a$ and $b$ and an array $SG$ of $2\log^2(n)$ bits (that correspond to a Bloom filter), such that each bit with value one in $SG$ at index $l$ corresponds to at least one edge $(i,j)$, $i<j$, between two vertices in the clique, for which $h_{a,b}(i,j)=l$. 
\end{definition}

 Note that $a,b$ and the clique edges, defined by the input string, define the bit values of the $SG$. The $SG$ is an array of $2\log^2(n)$ (or e.g., $2\log^3(n)$) bits, such that a bit with index $l$ has a value one iff there is a clique edge $(i,j)$ for which $h_{a,b}(i,j)=l$ (or $h_{r_1}(i,j)=l$ when the Toeplitz matrix is used). 

Note that after the clique edge insertions,
an edge of the $n$ nodes graph exists with probability $\alpha$, where $\alpha$ is the ratio of number of ones in $SG$ over the total number of entries in $SG$. $\alpha$ can be tuned to imply (w.h.p.) the univalent of the graph to consist of no clique of size $\log(n)$ but the one embedded, as described next. 
 
An important issue to note: The probability space of the graphs $G$ defined by the mapping above is not the well-known $G(n,p)$ (for $p=\alpha$), since the probabilities of edges to exist might not be independent. The independence (approximation) is based on the uniform distribution yielded by the universal hash function used to construct $SG$, see, e.g.,   \cite{CW79 (1979)}, 
\cite{LW05 (2005)},
\cite{PP08 (2008)}, \cite{WC (1981)}.

To increase the probability for univalence, namely, one clique with high probability, $SG$ should contain more zeros than ones.
In the sequel, we suggest replacing the $2 \log^2(n)$ bits in $SG$ to $2k_0 \log^2(n)$ bits.\\

\section{Singularity of the log(\lowercase{n}) Planted Clique} 

In this section, we prove that the uniqueness of the (plant) clique solution is implied by $\alpha \leq 1/(2k_0)$, $k_0=\log(n)$. \\

\noindent
Our one-way function is defined by:  

\noindent
$\bullet$ {\it Instance.}   A binary array $SG$ of $2\log^3(n)$ entries and $h$, where $h$ is defined by $a$, $b$ and $p_h$ in the case of the Carter and Wegman universal hash function $h_{a,b}$ or $r_1$ in the case of $h_{r_1}$ of a Toeplitz hash.
In addition, a permutation definition $perm$ for the $\log(n)$ vertices of the clique.\\

\noindent
{\it -- The output of the OWF is the instance.}\\

\noindent
$\bullet$ {\it Generation.} Generate $SG$ a binary array of $2\log^3(n)$ entries. $SG$ encodes $\log(n)$ randomly chosen vertices and randomly chosen hash function parameters  (according to the input random string) having the value one in all entries that correspond, utilizing $h$, to an edge of the clique; all the rest of the entries are zeros.\\

\noindent
{\it -- The input random string of the OWF.} A random string used to define:\\
(1) $\log(n)$ distinct random vertices identities defined by the input random string according to the permutation index described in \cite{DLH13 (2013)} used to define the first $\log(n)$ vertices in a permutation.
Let $perm$ be the permutation of these $\log(n)$ vertices that implies a sorted version of them\footnote{Outputting $perm$ as part of the instance allows exact reversibility to the input random string, once the vertices' clique is revealed.}.
In the sequel, we use {\it Extract Distinct $\log(n)$ Vertices} for the procedure above that is used to extract the $\log(n)$ vertices and $perm$.
(2) The following suffix of the random string defines the hash function parameters, $\log(2\log^3(n))$ bits for $a$, $b$ in case of $h_{a,b}$ or $2\log(n)$ bits for $r_1$. \\

\noindent
$\bullet$ {\it Solution.}   A set of $\log(n)$ vertices arranged according to $perm$ that can be used to generate $SG$ using the generation process above.

\begin{theorem}
\label{t:npc}
    For 
    the graphs defined in Definition \ref{def:graphs} there exists $\alpha \leq 1/(2k_0)$, where $k_0$ is at most polylogarithmic in $n$,  and $n \geq n_0$, for which the probability of having an additional non-planted clique of size $\log(n)$ encoded by $SG$, assuming independent uniform probability of edge existence, is less than $2(c2^c)/(2c)^{c-1}$
    where $c=\log(n)$.
\end{theorem}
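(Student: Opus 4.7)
The plan is to apply the first-moment method. Let $X$ count the $c$-cliques of $G$ other than the planted one, with $c = \log n$. A candidate clique sharing exactly $k$ vertices with the planted clique inherits $\binom{k}{2}$ edges (present with probability $1$) and needs $\binom{c}{2} - \binom{k}{2}$ additional edges, each occurring independently with probability $\alpha$ under the stated independence approximation. Hence
\[
E[X] = \sum_{k=0}^{c-1} \binom{c}{k}\binom{n-c}{c-k}\,\alpha^{\binom{c}{2}-\binom{k}{2}},
\]
and by Markov's inequality $\Pr[X\geq 1]\leq E[X]$. I would take $\alpha = 1/(2k_0)$ with $k_0 = c$ (polylogarithmic in $n$, as required) and denote the $k$-th summand by $T_k$.

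I would first isolate $T_{c-1} = c(n-c)\alpha^{c-1}$; substituting $n = 2^c$ and $\alpha = 1/(2c)$ gives $T_{c-1} \leq (c\cdot 2^c)/(2c)^{c-1}$, which already matches half of the target bound. The remaining task is to show $\sum_{k=0}^{c-2} T_k \leq T_{c-1}$ for all sufficiently large $n$. A direct calculation yields the consecutive ratio
\[
\frac{T_k}{T_{k+1}} = \frac{(k+1)(n-2c+k+1)}{(c-k)^2}\,(2c)^{-k},
\]
and in particular $T_{c-2}/T_{c-1} \approx (c-1)/c^{c-2}$, which is very small for large $c$. The sequence $\{T_k\}$ is U-shaped: it decreases from $T_0$ toward a minimum somewhere in the middle and then rises sharply up to $T_{c-1}$. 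Since the ``rising'' ratios near $k = c-1$ are super-polynomially large, the right half of the sum contributes $(1+o(1))T_{c-1}$; while a direct pointwise estimate $T_0 \leq e^c(2/c)^{c(c+1)/2}$, combined with a geometric series bound $2T_0$ on the left half, shows the left half is vastly smaller than $T_{c-1}\approx 2/c^{c-2}$. Together these give $E[X]\leq 2T_{c-1}$; choosing $n_0$ large enough to absorb all $o(1)$ corrections yields $E[X] < 2(c\cdot 2^c)/(2c)^{c-1}$, which by Markov's inequality is the claimed bound.

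The principal obstacle is handling $\sum_{k=0}^{c-2} T_k$ uniformly across all $k$, since $\{T_k\}$ is not monotonic. The cleanest route is to exploit the U-shape explicitly: split the sum at its minimum and bound each half as a rapidly convergent geometric-type series anchored at $T_0$ and $T_{c-1}$ respectively, using the ratio formula above and the pointwise estimate $T_0 \leq e^c(2/c)^{c(c+1)/2} \ll T_{c-1}$ to confirm that the left half is asymptotically negligible and that only a constant factor ($\leq 2$) is lost when passing from $T_{c-1}$ to the full sum.
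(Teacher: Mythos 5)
Your proposal is correct and takes essentially the same route as the paper: a union bound over candidate cliques stratified by their overlap with the planted clique (you index by shared vertices, the paper by new vertices), with the single-vertex-swap term $c(n-c)\alpha^{c-1}\le (c2^c)/(2c)^{c-1}$ identified as dominant and the rest of the sum absorbed into the factor of~$2$. The one substantive difference is in how the tail is controlled: the paper asserts that consecutive ratios of its term bounds are uniformly at most $1/2$, but the correct consecutive ratio is $c2^c/(2c)^{c-k-1}$ (in the paper's indexing), which exceeds $1$ near the fully-disjoint end, so the term sequence is indeed U-shaped as you observe rather than monotone; your split of the sum at its minimum, with the left half anchored at $T_0\le e^c(2/c)^{c(c+1)/2}\ll T_{c-1}\approx 2/c^{c-2}$, is the more careful (and correct) way to reach the same $2T_{c-1}$ bound.
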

  
\begin{proof}
The proof assumes independence of probabilities of edge existence, as in the standard definition of a random graph $G(n,p)$. The assumption is based on (the approximate) uniform distribution yielded by the universal hash function used to construct $SG$, see, e.g., \cite{CW79 (1979)}, 
\cite{LW05 (2005)},
\cite{PP08 (2008)},
\cite{WC (1981)}\footnote{The polynomial-based hashing in \cite{WC (1981)} can be used to ensure $k${\it -wise} independence among the hashes of $k$ edges of the logaritmic (in the number of nodes) clique.}. 

\noindent
We compute an upper bound on the probability that the output graph contains one or more non-planted cliques of size $\log(n)$. We call such a clique a {\it bad clique}. For each $k\in [1,\log(n)]$, we compute an upper bound on the probability that the graph contains a bad clique with exactly $k$ vertices that are not in the planted clique. We then sum up all the upper bounds to have an upper bound on the probability that the output graph contains a bad clique. 
\\
 \noindent
  $\bullet$ $k=  1$. The bad clique contains exactly $\log(n) -1$  vertices of the planted one and one vertex that is not in the planted clique.  The probability that node $v$ is not in the planted clique and $\log(n)-1$ specific vertices in the planted clique form such a bad clique is $  \alpha^{\log(n)-1} $. Hence, by the union bound, the probability of such a bad clique is bounded from above by $(n-\log(n))\log(n) \alpha^{\log(n)-1}$.\\
  $\bullet$   $k= 2$.  In this case the probability that two nodes $u,v$ not in the planted clique are connected to each other and also to a given subset of $k-2$ nodes in the planted clique is    $\alpha^{2\log(n)-3}$; Since $2\log(n)-3$ non planted edges must be in the graph as follows: $2\log(n)-4$ connecting $u$ and $v$ to $k-2$ nodes in the planted graph, plus the edge $(u,v)$. Using the union bound over all pairs of nodes not in the clique and all subsets of $\log(n)-2$ nodes in the clique, we get:    
  $$\binom{n-\log(n)}{2}\binom{\log(n)}{2}\left (\alpha^{2\log(n)-3}\right )$$\\
  $\bullet$
  For arbitrary $k\in[2,\log(n)]$, the probability that the graph contains a clique consisting of specific $ k$ vertices not in the planted clique and specific $ \log(n)-k$ vertices in the planted clique is 
  $\alpha^{\left(k(\log(n)-k)+\binom{ k}{2}\right)}$. By the union bound over all possible combinations, we get that this probability is bounded by
  $$ \binom{n-\log(n)}{k}\binom{\log(n)} { k} \left(\alpha^{\left(k(\log(n)-k)+\binom{ k}{2}\right)}\right) $$
  Summing up the probabilities for all $k\in[1,\log(n)]$ yields
 \begin{equation}\label{eq:sumall}
 \sum_{k=1}^{\log(n)}\left[\binom{n-\log(n)}{ k}\binom{\log(n)} {k} \left(\alpha^{\left(k(\log(n)-k)+\binom{ k}{2}\right)}\right)\right]
 \end{equation}
  
When $k_0=\log(n)$, $\alpha \leq 1/(2\log(n))$, the elements in the sum can be upper bounded by: $(n \log(n))^k / (2\log(n))^{k(\log(n)-(k+1)/2)}$.   
$n^{k}$ upper bounds the first binomial, and the second binomial is upper bound by $\log^k (n)$. The power of the fraction is $k(\log(n)-(k+1)/2)$.
 
$n=2^c$, yields that the $k$'th element in the sum is $((c2^c)^k)/(2c)^{k(c-(k+1)/2)}$.
For big enough $c$, e.g., $c=64$, the result for $k=1$ is less than $2^{-371}$, for $k=2$, less than $2^{-735}$, for $k=3$, $2^{-1092}$, and for the last element of the sum, $k=c=64$, the probability is less than $2^{-9512}$, thus, approaching zero with the growth of $k$ in the sum.\\

\noindent
The ratio of an element in the sum over the previous element is:

\begin{equation}
[(c2^c)^{(k+1)}/(2c)^{(k+1)(c-((k+1)+1)/2)}]/[(c2^c)^k)/(2c)^{k(c-(k+1)/2)}]=c2^c/(2c)^{c-(k+2)/2}
\end{equation}

Larger the $k$ smaller the ratio, consider the last elements, where $k=c-1$, then the ratio is $c2^c/((2c)^{c-(c+1)/2}))=c2^c/(2c)^{(c/2)-1}$. 
When $c=64$, the smallest ratio is $2^{-147}$, much smaller than $1/2$,
and is smaller than $1/2$ for every $c>8$. 
When the ratio is always less than or equal to $1/2$, we can bound the sum by doubling the first element of the sum.

Thus, the sum of the first $\log(n)$ elements is less than $2(c2^c)/(2c)^{c-1}$.
Therefore, for $n=2^{64}$, the probability for a univalent solution is greater than $1-2^{-370}$. 
\end{proof}

Note that when $ k_0=\log(n)$, the number of bits in $SG$ is $2\log^3(n)$, and, therefore, polylogarithmic.
Also, observe that the number of edges in the implicitly defined graph is 
$\Theta(n^2/\log(n))$, exponentially more than the $\Theta(\log^2n)$ edges of the planted clique. 
Note further that our probabilistic analysis for univalence is for the expected case rather than in the worst case, which is a benefit in the scope of one-way functions. 

The following conjecture is based on the quoted citation from \cite{PY86 (1986)}
``Our results suggest that, as a rule, succinct representations have the effect of {\it precisely} exponentiating the complexity (time {\it or} space) of graph properties.'' 

The function defined by the input and output above is a specific (Bloom Filter-based) succinct representation of the $~\log(n)$ clique. A clique of size three, namely, the graph property of having a triangle, is proven in \cite{GW83 (1983)} to be in {\it NP} for both upper and lower bounds when the graph is represented in a certain succinct form.
We conjecture that our succinct representation of the $\log(n)$ clique is also {\it NP} hard.

As the symmetry among nodes' identities yields indifference in the identities to the solution of an instance, to the clique's existence or finding,
$G$ instances are also hard on average.

We use this conjecture in every one of the next sections, assuming the obtained succinct representation of the planted $\log(n)$ clique instances is {\it NP}-hard on average and does not reveal the clique planted in them. We prove that with high probability, no clique other than the planted clique is encoded in an instance.  

In terms of the equivalence sets, the set of edges of $G$ mapped to the $i$'th entry $SG[i]=0$ is an empty set. Each entry $SG[i]$ in $SG$ with a value of one represents a distinct set of edges of $G$ of approximately $(n^2-n)/(4\log^3(n))$ edges. 

Roughly speaking, we next aim to blare the information associated with the information associated with each bit of value one of $GS[i]$ by using several hash functions for mapping each edge to a binary array, and later even avoid the exposure of the parameters of the hash functions. 
The entropy \cite{S48 (1948)} grows as we use a smaller number of bits in the binary array to represent the same number of random bits. Similarly, the Kolmogorov complexity \cite{K68 (1968)} is higher when the number of bits in the binary array is smaller while the number of random bits encoded stays the same, leading to incompressibility.

In the following sections, we reduce the instance number of bits while preserving the high probability for the univalence of the planted clique and keeping the number of edges in the implicit graph exponentially larger than the instance size. In particular, we preserve (or enlarge) the number of edges to be qualified as edges in the explicit graph.  

 Black-boxing heuristics were suggested in e.g., \cite{R11 (2011)}. Consider an oracle $ORC$ that answers queries on whether an edge $(i,j)$ exists in a graph $G$. Consider further that $ORC$ uses the Bloom Filter succinct representation to answer queries. Thus, it eliminates any information that the exposure of the Bloom Filter binary array and the hash functions may reveal. Then, we obtain the following theorem that assumes (the approximation) of independence among edge existence in the explicit graph.

 \begin{theorem}
 The number of queries needed to reveal a $\log(n)$ planted clique by querying the oracle $ORC$ that uses a polylogarithmic Bloom filter succinct representation for answering queries is exponential in the Bloom filter size.  
 \end{theorem}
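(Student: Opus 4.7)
The plan is to reduce the task of finding the planted clique from oracle queries to a standard planted-clique search in a (nearly) uniform random graph, and then to apply a symmetrization-based query lower bound. Since $ORC$ hides both the array $SG$ and the hash parameters, the adversary's entire view after $q$ queries is a length-$q$ bit-transcript of edge answers; the graph and filter are never revealed directly.

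I would carry out the proof in four steps. First, invoke the uniformity property of the universal hash (the same independence approximation used in the proof of Theorem \ref{t:npc}) to argue that, over the random choice of clique vertices and hash parameters, the joint distribution of answers to any polynomial-sized batch of queries is statistically close to that of a $G(n,\alpha)$ random graph with an independently planted $\log(n)$-clique for $\alpha\le 1/(2\log n)$. Second, apply Theorem \ref{t:npc} to conclude that, with overwhelming probability, the planted clique is the unique $\log(n)$-clique, so any algorithm that ``reveals a clique'' must in fact identify the planted vertex set. Third, perform a symmetrization: for each candidate $\log(n)$-subset $S\subseteq V$, let $ORC_S$ denote the oracle with planted set $S$; two instances $ORC_S$ and $ORC_{S'}$ can agree on the answer to a queried edge $(u,v)$ unless $(u,v)$ lies in the symmetric difference of their planted-clique edge sets, and for a uniformly random $S$ any fixed edge participates in the clique with probability $\binom{\log n}{2}/\binom{n}{2} = O((\log n/n)^2)$. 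A union bound over $q$ queries then shows that the adversary's posterior distribution on the planted set remains spread over $n^{\Omega(\log n)}$ subsets unless $q\ge n^{\Omega(\log n)} = 2^{\Omega(\log^2 n)}$. Fourth, compare with the Bloom filter size $B=2\log^3 n$: the bound $2^{\Omega(\log^2 n)} = 2^{\Omega(B^{2/3})}$ is super-polynomial in $B$ and exceeds every fixed polynomial in the filter size, which is the intended content of ``exponential in the Bloom filter size'' when $B$ is polylogarithmic.

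The main obstacle I anticipate is step three: making the symmetrization rigorous when the underlying distribution is only approximately independent. One must track the accumulated error from the universal-hash approximation across all $q$ queries and all $\binom{n}{\log n}$ candidate planted sets, so the independence gap has to be small enough to absorb the union bound; this likely forces the $k$-wise independent polynomial hash or the Toeplitz-enhanced hash introduced in the Preliminaries rather than the basic $h_{a,b}$. A secondary subtlety is the precise reading of ``exponential in the Bloom filter size'': because distinct edge queries are capped by $\binom{n}{2}=O(n^2)$, a literal $2^B$ bound is impossible, so the statement has to be formalized as super-polynomial growth in $B$, which I would argue is the natural meaning of ``exponential'' when $B$ is itself only polylogarithmic in $n$.
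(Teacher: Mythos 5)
Your overall strategy is in the same spirit as the paper's (much shorter) argument: the paper's proof is essentially your step three stripped down to a hitting-probability computation. It observes that to identify the clique by positive answers, some query must first land on one of the $\binom{\log n}{2}$ clique edges, which for a random planted set happens with probability $\Theta(\log^2(n)/n^2)$ per query, so the expected number of queries is $\Theta(n^2/\log^2(n))$; alternatively, ruling the clique out by elimination costs $\Theta(n^2-\log^2(n))$ queries. Either way the count is exponential in $\log n$ and hence superpolynomial in the polylogarithmic filter size, which is all the paper establishes. Your steps one, two, and four are consistent with this, and your closing remark that a literal $2^{\Theta(B)}$ bound is impossible (since $\binom{n}{2}=2^{\Theta(B^{1/3})}$ distinct edge queries determine the whole graph) is a fair and correct reading that the paper itself leaves implicit.

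However, there is a genuine error in your step three: the claimed lower bound $q\ge n^{\Omega(\log n)}=2^{\Omega(\log^2 n)}$ cannot hold. An adversary who queries all $\binom{n}{2}=O(n^2)$ edges learns the entire explicit graph, and by your own step two (uniqueness of the $\log(n)$-clique via Theorem \ref{t:npc}) the posterior then collapses to a single planted set; so the query complexity is at most $O(n^2)\ll n^{\Omega(\log n)}$, contradicting your conclusion. The slip is in the union bound: if each query reveals a clique edge with probability $O((\log n/n)^2)$, then the probability that \emph{any} of $q$ queries does so is $O(q\log^2(n)/n^2)$, which forces only $q=\Omega(n^2/\log^2(n))$ before the transcript becomes informative --- you appear to have substituted the number of candidate planted sets, $n^{\Theta(\log n)}$, for the query bound itself. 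Correcting this recovers exactly the paper's $\Theta(n^2/\log^2(n))$ bound, i.e., $2^{\Omega(\log n)}=2^{\Omega(B^{1/3})}$ rather than $2^{\Omega(B^{2/3})}$. A secondary point your symmetrization glosses over (as does the paper) is that positive answers are not confined to clique edges: the Bloom filter returns false positives at rate $\alpha$, and these answers are correlated across queries through the shared hash and array, so ``agreement unless the edge lies in the symmetric difference'' is not quite right; this only helps the lower bound but would need to be handled to make the argument rigorous.
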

 \begin{proof}
A sequence of queries may identify a planted clique either by querying $(\log^2(n)-\log(n))/2$ edges that form a clique or by querying and identifying all edges that are not part of the clique, which form a complementary set of edges to the clique. The complementary set strategy requires $\Theta(n^2-\log^2(n))$ queries, which is exponential in the $ORC$ memory. To identify a clique by querying the clique edges, there must exist a query on one of the clique edges, say $(i,j)$, for which a first positive answer is obtained
the probability of querying on the first clique edge and receiving a positive answer is $\Theta(\log^2(n)/n^2)$. Therefore, the expected number of queries is exponential in the succinct representation employed by $ORC$. 
\end{proof}

 The following sections are heuristics for the $ORC$ abstraction and usage scenario.  

\section{Towards Black-boxing the Binary Array}
Bloom filters benefit from using several hash functions to reduce the false-positive probability and use an array length that fits the number of hash functions to yield $\alpha=1/2$. Using several binary arrays and hash functions can yield a more balanced number of ones and zeros in the succinct graph representation. Our concern is establishing univalence while avoiding the discovery of the planted clique by polynomial-time algorithms. Hash function black-boxing heuristics (e.g., \cite{R11 (2011)}) serve as our goal rather than reducing the false-positive probability. Thus, we suggest a heuristic black-boxing of the $SG$ array content to have the highest entropy, using several Bloom filters, $SG_1, SG_2,\ldots, SG_f$, each $SG_i$ employing an independently chosen at random hash function 
while keeping the univalence as described next.

Given that the number of elements to be hashed is $|E_c|=(\log^2(n)-\log(n))/2$, we choose $SG_i$, to employ a binary array of size
$m=|E_c|/ \ln(2)$ entries, which is the optimal size when $SG_i$ uses one hash function. Furthermore, the choice of $m$ implies $\alpha=1/2$. 

To reduce the probability of outputting a positive answer, we output a positive answer iff every $SG_i$ Bloom filter outputs a positive answer. 
To keep the arguments of Theorem \ref{t:npc}, we use $f$ Bloom filters to obtain the probability of $\alpha$ for a positive answer. 
A positive answer from the Bloom filter is obtained only when all $f$ hash functions return one.
Thus, $\alpha=((\log^2(n)-\log(n))/2)/(2\log^3(n))=(1/2)^{f}$,
yielding $f=2+\log(\log(n))$, thus, polylogarithmic size of memory, $(|E_c|/\ln(2))f=(|E_c|/\ln(2))(2+\log(\log(n)))$. The generation process is, therefore, polylogarithmic.\\

\noindent
Here, our one-way function is defined by: \\ 

\noindent
$\bullet$ {\it Instance.}  $f$ binary arrays 
$SG_1,SG_2,\ldots,SG_{f}$, each of 
$(\log^2(n)-\log(n))/\ln(2)$ bits, where $f=2+\log(\log(n))$.  
In addition, $f=2+\log(\log(n))$ descriptions (parameters) for hash functions $h_1,h_2,\ldots, h_{f}$. Where the description of $h$ is defined by $a$, $b$, and $p_h$ in the case of the universal Carter and Wegman definition of the hash function $h_{a,b}$ described above, and $r_1$ in the case of Toeplitz, $h_{r_1}$. In addition, a permutation definition $perm$ for the $\log(n)$ vertices of the clique.\\ \\

\noindent
{\it -- The output of the OWF is the instance.}\\

\noindent
$\bullet$ {\it Generation.} 
Randomly generate $v_1,v_2,\ldots,v_{\log(n)}$ distinct vertices each in the range $1$ to $n$. 
Randomly generate the parameters for $h_1,h_2,\ldots,h_{f}$.
Generate $f$ binary arrays 
$SG_1,SG_2,\ldots,SG_{f}$, each of 
$(\log^2(n)-\log(n))/\ln(2)$ bits, where $f=2+\log(\log(n))$. 
Every of $SG_i$s uses $h_{i}$ to encode the edges of the clique defined by $v_1,v_2,\ldots,v_{\log(n)}$; having the value one in all entries that correspond when utilizing $h_{i}$, to an edge of the clique; all the rest of the entries of $SG_i$ are zeros.\\

\noindent
{\it -- The input random string of the OWF.} A random string used to define:\\ 
(1) {\it Extract Distinct $\log(n)$ Vertices} from the random input string, and compute $perm$, their permutation with relation to their sorted permutation.\\
(2) The following suffix of the random strings defines $f$ hash functions parameters, either two $\log(2\log^3(n))$ bits for $a$, $b$ in case of $h_{a,b}$ for each of the $f$ functions, or $2\log(n)$ bits for $r_1$ for each of the $f$ functions. \\

\noindent
$\bullet$ {\it Solution.}   A set of $\log(n)$ vertices that can be used to generate $SG_1,SG_2,\ldots, SG_f$ 
when coupled with $h_1,h_2,\ldots,h_{f}$ using the generation process above\\

Theorem \ref{t:npc} implies univalence after black-boxing the binary array as the probability for finding the value one (and therefore also zero) in the original one array version is identical to returning the value one as a result of finding one in every $SG_i$, when using $h_{i}$ over the queried edge. 

\section{Towards Black-boxing the Hash Parameters} 
Next, we show that the singularity of the planted clique is preserved even when the universal hash parameters are unraveled; a function of the unraveled planted clique edges defines the parameters. Thus, Black-boxing the hash functions also (see, e.g., \cite{R11 (2011)}).  

The number of possible values for the parameters of a universal hash function, in the case of Carter and Wegman definition, is {\it hf}$_1=(2\log^3(n))^2$,
and the number of random bits used to define {\it hf}$_1$ is $2+6\log(\log(n))$ bits, each edge of the clique is represented by $2\log(n)>2+6\log\log(n)$ bits. Thus,
we can use the random bits related to the planted clique, possibly bits that are a function of three distinct clique edges that form a triangle\footnote{Triangles are {\it NP} to be found in succinct graph representation \cite{GW83 (1983)}, which yields an extra difficulty to be discovered in the (masked) succinctly represented instance. Other choices for defining the hash parameters as a function of the random clique vertices are possible.} bitwise xor them to define the parameters of {\it hf}$_1$.

Similarly, the number of possible values for the parameters of a universal hash function, in the case of Toeplitz, is {\it hf}$_2=2^{2\log(n)}$ (in the original form, a small factor more in our suggested enhanced form).  
The number of random bits used to define {\it hf}$_2$ is $2\log(n)$ bits; thus, we can use the xor of the three first (say, according to their integer values) distinct clique edges, each of which is not yet used for defining the hash functions parameters. \\

\noindent
Here, our one-way function is defined by:  \\

\noindent
$\bullet$ {\it Instance.}  $f$ binary arrays 
$SG_1,SG_2,\ldots,SG_{f}$, each of 
$(\log^2(n)-\log(n))/\ln(2)$ bits, where $f=1+\log(\log(n))/2$.
In addition, a permutation definition $perm$ for the $\log(n)$ vertices of the clique.\\

\noindent
{\it -- The output of the OWF is the instance.}\\

\noindent
$\bullet$ {\it Generation.} 
Randomly generate a sorted vector $v_1,v_2,\ldots,v_{\log(n)}$ of distinct vertices each in the range $1$ to $n$. Produce a sorted array, $E_c$ of $(\log^2(n)-\log(n))/2$ entries, $(v_i,v_j)$, edges of the clique.
Let $tae_1,tae_2,\ldots$ be a vector of triangles of distinct clique edges ordered lexicographically by their indices. The number of such triangles is $|E_c|/3$. For $j=0,1,\ldots$ compute $tae_{3j+1} \oplus tae_{3j+2} \oplus tae_{3j+3}$ to generate the parameters for $h_{j+1}$, generating parameters for $h_1,h_2,\ldots,h_{f}$.
As the number of triangles is $(\log^2(n)-\log(n))/6$, that is greater than $f =1+\log(\log(n))/2$, and at most three triangles suffice to encode the parameters $(a,b,p)$ or $r_1$ of a hash function, the parameters of all hash functions can be defined by the triples.

Generate $f$ binary arrays 
$SG_1,SG_2,\ldots,SG_{f}$, each of 
$(\log^2(n)-\log(n))/\ln(2)$ bits, where $f=1+\log(\log(n))/2$. 
Every of $SG_i$s uses $h_{i}$ to encode the edges of the clique defined by $v_1,v_2,\ldots,v_{\log(n)}$; having the value one in all entries that correspond, utilizing $h_{i}$, to an edge of the clique; all the rest of the entries of $SG_i$ are zeros.\\

\noindent
{\it -- The input random string of the OWF.} A random string used to define $\log(n)$ distinct vertices using the
{\it Extract Distinct $\log(n)$ Vertices} over the random input string.\\

\noindent
$\bullet$ {\it Solution.}   A set of $\log(n)$ vertices, $v_1,v_2,\ldots,v_{\log(n)}$ that can be used to generate the instance $SG_1,SG_2,\ldots, SG_f$ 
when coupled with the $h_1,h_2,\ldots,h_{f}$ that are generated from the clique edges formed among 
$v_1,v_2,\ldots,v_{\log(n)}$, as defined in the generation process.

\begin{theorem}
There are polylogarithmic in $n$ number of black-boxed arrays and polylogarithmic black-boxed hash function parameters, implying with a high probability that the planted clique is the only solution for an instance.
\end{theorem}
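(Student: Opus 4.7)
The plan is to reduce the univalence claim to the union-bound argument of Theorem~\ref{t:npc}, after (i) reading off the polylogarithmic size bounds directly from the construction and (ii) showing that deriving the hash parameters from the planted clique does not spoil the distributional assumptions used there. For (i), the construction uses $f = 1 + \log(\log(n))/2 = O(\log\log(n))$ arrays of $(\log^2(n)-\log(n))/\ln(2) = O(\log^2(n))$ bits each, while the hash parameters are not part of the instance but are computed from at most $f$ disjoint XOR-of-triangle-edge triples of planted clique edges, each parameter occupying $O(\log(n))$ bits. Together with $perm$, this establishes the ``polylogarithmic number of black-boxed arrays'' and ``polylogarithmic black-boxed hash parameters'' halves of the statement.

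Second, I would transport the union-bound computation of Theorem~\ref{t:npc} into this setting. Combining $f$ Bloom filters by the ``AND of all $f$ answers'' rule makes the effective probability that a fixed candidate edge receives a $1$ equal to $\alpha = (1/2)^f$, which is $1/\mathrm{polylog}(n)$, so the hypothesis $\alpha \leq 1/(2k_0)$ of Theorem~\ref{t:npc} is satisfied for a polylogarithmic $k_0$. Provided that $h_1,\ldots,h_f$ act uniformly and (essentially) independently on the edges not belonging to the planted clique, the per-$k$ contribution $\binom{n-\log(n)}{k}\binom{\log(n)}{k}\alpha^{k(\log(n)-k)+\binom{k}{2}}$ appearing in (\ref{eq:sumall}) and the summed geometric-style bound go through unchanged, leaving only negligible probability that any non-planted $\log(n)$-clique is simultaneously encoded in all $f$ arrays.

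The principal obstacle I anticipate is the new dependence between the hash parameters and the clique edges they must hash. I would handle it in two steps. Because the $\log(n)$ planted vertices are uniformly random and the three triangle edges feeding each $h_j$ are pairwise disjoint across $j$, the induced distribution on $(h_1,\ldots,h_f)$ is approximately uniform over the hash-parameter space, so any fixed candidate bad-clique witness effectively ``sees'' a freshly seeded universal hash family. To upgrade uniformity to the independence required for the union-bound factorisation, I would instantiate $h$ as the polynomial $k$-wise independent hash from the preliminaries with $k$ polylogarithmic and chosen to exceed the total number of edges involved in specifying the hash parameters plus the at most $\binom{\log(n)}{2}$ edges of any candidate bad clique. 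In that regime, the hash evaluations on the seed edges and on any candidate non-planted edge factor as the union bound demands, and plugging the resulting uniform independence into Theorem~\ref{t:npc} yields the advertised univalence-with-high-probability bound.
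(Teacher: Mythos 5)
Your proposal is essentially sound, but it takes a different route from the paper, and the difference is worth spelling out because it touches the one structural point that changes in this section. The paper's proof is a single union bound over all $\binom{n}{\log(n)}-1 < n^{\log(n)}$ candidate vertex sets: since the hash parameters are now \emph{derived from the candidate's own edges}, a false solution must map all $(\log^2(n)-\log(n))/2$ of its edges onto $1$-entries under its \emph{own} freshly derived hash functions, so every candidate pays the full exponent $\binom{\log(n)}{2}$, giving an additional failure probability of $n^{\log(n)}/(2\log(n))^{(\log^2(n)-\log(n))/2} \approx 2^{-10016}$ for $c=64$, added on top of Theorem~\ref{t:npc}. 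You instead re-run the $k$-overlap decomposition of Theorem~\ref{t:npc} with $\alpha=(1/2)^f$, which credits a candidate with the edges it shares with the planted clique being present ``for free.'' That is still a \emph{valid} upper bound (landing the new edges on $1$-entries is a necessary condition for being a false solution), but it is far looser (on the order of $2^{-371}$ rather than $2^{-10016}$) and it slightly misses the point of the black-boxed setting: because each candidate carries its own hash functions, no edge is present for free, and the uniform per-candidate bound is both simpler and much stronger than the overlap-stratified one. Your discussion of $k$-wise independence is reasonable in spirit but does not actually resolve the real circularity --- the hash seed is a deterministic function of the very edges being hashed, which is outside what $k$-wise independence over an independent seed guarantees; the paper does not resolve this either, labelling the derived parameters only as ``random but correlated,'' so this is not a gap relative to the paper, merely a shared heuristic assumption.
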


\begin{proof}
Theorem \ref{t:npc} implies univalence when the hash parameters used for planting the clique are considered. Considering all possible non planted cliques $\binom{n}{\log(n)}-1<n^{\log(n)}$, each defines (random but correlated with the random clique vertices identities) hash function parameters that yield a probability $(2\log(n))^{(\log^2(n)-log(n))/2}$, to form a clique.
The additional probability due to black-boxing the hash parameters is bounded by $n^{\log(n)}/ (2\log(n))^{(\log^2(n)-log(n))/2}$ which is $2^{(c^2+c)/2}c^{-(c^2-c)/2}$, where $c=\log(n)$.
When $c=64$, the additional probability due to black-boxing the hash function is $2^{-10016}$.
\end{proof}

\section{Self Masking for Further Enhancement of Black-boxing}
Consider an instance being only $SG$ where $SG=SG_1 \oplus SG_2\ldots \oplus SG_f$.
When {\it assuming} symmetry due to the approximate uniform independent probability of the resulting xored bits of $SG$, then the number of possible $SG$'s is $2^{(\log^2(n)-\log(n))/\ln(2)}$.
The number of possible cliques of $\log(n)$ vertices is $\binom{n}{\log(n)}$.

Thus, the probability that a clique is mapped to a certain $SG$ is: $\binom{n}{\log(n)}/(2^{(\log^2(n)-\log(n))/\ln(2)})$. 
When $n=2^{64}$ the probability is upper bounded by $2^{-2044}$. \\

\noindent
Here, our one-way function is defined by:  \\

\noindent
$\bullet$ {\it Instance.}  A binary array 
$SG$, of 
$(\log^2(n)-\log(n))/\ln(2)$ bits.
In addition, a permutation definition $perm$ for the $\log(n)$ vertices of the clique.\\

\noindent
{\it -- The output of the OWF is the instance.}\\

\noindent
$\bullet$ {\it Generation.} 
Randomly generate a sorted vector $v_1,v_2,\ldots,v_{\log(n)}$ of distinct vertices each in the range $1$ to $n$. Produce a sorted array, $E_c$ of $(\log^2(n)-\log(n))/2$ entries, $(v_i,v_j)$, edges of the clique.
Let $tae_1,tae_2,\ldots$ be a vector of triangles of distinct clique edges ordered lexicographically by their indices. The number of such triangles is $|E_c|/3$. For $j=0,1,\ldots$ compute $tae_{3j+1} \oplus tae_{3j+2} \oplus tae_{3j+3}$ to generate the parameters for $h_{j+1}$, generating parameters for $h_1,h_2,\ldots,h_{f}$.
As the number of triangles is $(\log^2(n)-\log(n))/6$, which is greater than $f =1+\log(\log(n))/2$, the parameters of all hash functions can be defined by the triples.

Generate $f$ binary arrays 
$SG_1,SG_2,\ldots,SG_{f}$, each of 
$(\log^2(n)-\log(n))/\ln(2)$ bits, where $f=1+\log(\log(n))/2$. 
Every of $SG_i$s uses $h_{i}$ to encode the edges of the clique defined by $v_1,v_2,\ldots,v_{\log(n)}$; having the value one in all entries that correspond, utilizing $h_{i}$, to an edge of the clique; all the rest of the entries of $SG_i$ are zeros.

Generate $SG=G_1 \oplus SG_2 \oplus \ldots SG_{f}$.\\

\noindent
{\it -- The input random string of the OWF.} A random string used to define $\log(n)$ distinct vertices using the
{\it Extract Distinct $\log(n)$ Vertices} over the random input string.\\

\noindent
$\bullet$ {\it Solution.}   A set of $\log(n)$ vertices, $v_1,v_2,\ldots,v_{\log(n)}$ that can be used to generate the instance $SG_1,SG_2,\ldots, SG_f$ 
when coupled with the $h_1,h_2,\ldots,h_{l}$ that are generated from the clique edges formed among 
$v_1,v_2,\ldots,v_{\log(n)}$, as defined in the generation process.
Such that $SG=G_1 \oplus SG_2 \oplus \ldots SG_{f}$.\\

\noindent
{\bf Univalent and Birthday paradox.}
Before analyzing the probability of collision existence being very close to one, by the negligible probability that a value of $SG$ will be associated with a clique, the probability of a clique being associated with a distinct $SG$ is negligibly less than 1.

Still, when considering the birthday paradox, the probability of finding any set (of two or more) cliques that yield the same $SG$ is high as we calculate next, but still, finding the colliding instance can be a challenge. 
A commitment can be based on several instances to be reversed to ensure that these several instances have no collisions, see e.g., \cite{CDM23 (2023)}.

The collision probability is approximately:

$$1 - e^{-((n^2-n)/2)(((n^2-n)/2)-1)/(2\cdot 2^{(\log^2(n)-\log(n))/\ln(2))}}$$

Which is close to one, when, for example, $n=2^{64}$.

\section{Discussion}

One may use exponential preprocessing to construct an exponential lookup table, computing the result for every possible input set of $\log(n)$ vertices; such preprocessing requires memory for storing approximately $2^{(\log^2(n)-\log(n))/2}$ values. Obviously, for every OWF candidate, reversing an instance requires only $O(1)$ operations once such (exponential) preprocessing is done. 

On the other hand, when exponential preprocessing and exponential processing are not feasible at any stage, we speculate that finding the planted cliques is a non-feasible (exponential) challenge to cope with.  

We cannot anticipate all possible algorithms that may be used to reverse an instance; moreover, proving that no such algorithm exists implies $P \neq NP$. 

Lastly, observe that attacks can take a different approach than trying to solve an instance directly; see, e.g., \cite{L89 (1989)} in another context, for the useful great idea of solving different randomly chosen instances to solve a given one; the goal of \cite{L89 (1989)} is proving hardness on average. In our scope, a possible attack may be based on algorithm that uses several, $\log^q(n)$, for some constant $q$, random (sorted) vectors, $cl_i$, each of $\log(n)$ distinct vertices, $cl_1, cl_2, \ldots, cl_{\log^q(n)}$,  yielding polynomial time computation to compute a polynomial number of outputs $inst_1, inst_2, \ldots, inst_{\log^q(n)}$. The attack then may use $(cl_i,inst_i)$ as points in, say, an approximation function $F(x,y)$. $F(x,y)$ attempts, we speculate unsuccessfully, to reflect the result of the construction of the $OWF$. 
If no two $(cl_i,inst_i)$, $(cl_j,inst_i)$, $cl_i\neq cl_j$, despite the birthday paradox analysis, then the definition of $F^{-1}(x,y)$ maybe, but we speculate unlikely be, feasible.\\

\noindent
{\bf Acknowledgments.} 
It is a pleasure to thank 
Shlomo Moran and 
Leonid Levin for the very helpful long discussions, and Moti Yung, Moni Naor, Eylon Yogev, and Oded Margalit for their advice and feedback. In particular, Leonid Levin suggested the use of the Toeplitz-based hash function. Lastly, it is a pleasure to thank Jeffrey Ullman for suggesting the clear abstract wording.

\end{document}